\theoremstyle{plain}
\newtheorem{thm}{Theorem}[section]
\newtheorem*{thmI}{Theorem I}
\newtheorem*{thmII}{Theorem II}
\newtheorem{lem}[thm]{Lemma}
\theoremstyle{definition}
\newtheorem{defn}{Definition}[section]
\theoremstyle{remark}
\newtheorem{rem}{Remark}
\newcommand{\Var}{{\textsc{Var}}}
\newcommand{\Form}{{\textsc{Form}}}
\newcommand{\eg}{\textit{e.g.}}
\newcommand{\ie}{\textit{i.e.}}
\newcommand{\R}{{\mathbb{R}}}
\newcommand{\luk}{\L u\-ka\-s\-ie\-wicz}
\newcommand{\remove}[1]{}
\newcommand{\Ax}{{\mathbb{A}}}
\newcommand{\Log}{{\mathscr{L}}}
\DeclareMathOperator{\conv}{\rm Conv}
\DeclareMathOperator{\range}{\rm ran}
\journal{Fuzzy Sets and Systems}
\begin{document}

\begin{frontmatter}

\title{The logical content of triangular bases of fuzzy sets in {\L}ukasiewicz infinite-valued logic}

\author[Dico]{Pietro Codara\corref{sup}}
\ead{codara@di.unimi.it}
\author[Dico]{Ottavio M.\ D'Antona}
\ead{dantona@di.unimi.it}
\author[Mate]{Vincenzo Marra\corref{cor}}
\ead{vincenzo.marra@unimi.it}

\address[Dico]{Dipartimento di Informatica, Universit\`{a}
degli Studi di Milano, via Comelico 39, I-20135 Milano, Italy}
\address[Mate]{Dipartimento di Matematica ``Federigo Enriques'', Universit\`{a}
degli Studi di Milano, via Saldini 50, I-20133 Milano, Italy}
\cortext[cor]{Corresponding author.}
\cortext[sup]{Partially supported by \emph{Dote ricerca} --- FSE, Regione Lombardia.}

\begin{abstract}Continuing to pursue a research direction that we already explored  in connection with G\"odel-Dummett logic
and Ruspini partitions, we show here that \luk{} logic is able to express the notion of \emph{pseudo-triangular basis} of fuzzy
sets, a mild weakening of the standard notion of triangular basis. \textit{En route}  to our main result we obtain an
elementary, logic-independent characterisation of triangular bases of fuzzy sets.

\end{abstract}

\begin{keyword}
\luk{} logic \sep Fuzzy sets \sep Triangular bases \sep Abstract Schauder bases \sep Axiomatisations.


\MSC[2010] 03B50 \sep 03B52
\end{keyword}

\end{frontmatter}
\tableofcontents

\section{Prologue.}\label{s:intro}
\label{s:intro}
In this paper,
by a \emph{fuzzy set} we  always mean a
function $f \colon [0,1] \to [0,1]$, with  $[0,1]\subseteq\R$,
the real unit interval.
Throughout, we fix an integer  $n>0$, and a finite, non-empty
family
\[
P=\{f_1, \ldots, f_n\}
\]
of fuzzy sets. We further always assume that each  $f_i\in P$ is a continuous function with respect to the
usual (Euclidean) topology of $[0,1]$.

\smallskip We address here the general question, what is the
logical content of the family of fuzzy sets $P$.  By way of motivation, let us think of the real unit
interval $[0,1]$  as the normalised range of values of a physical observable, say temperature.  Then
each $f_i \in P$ can be viewed as a means of assigning a \emph{truth value} to a
proposition about temperature in some many-valued logic $\mathscr{L}$.
Had one no information at all about such propositions, one would be led to identify them with propositional variables $X_i$,
subject only to the axioms of $\mathscr{L}$. Intuitively, however, \emph{the set $P$ does encode information about
$X_1, \ldots, X_n$}.
For example, consider $P=\{f_1, f_2, f_3\}$ as in Fig. \ref{fig:ruspini},
and say $f_1$, $f_2$, and $f_3$ provide truth values for
the propositions $X_1 = $ ``The temperature is low'', $X_2 = $ ``The temperature is medium'',
and $X_3 = $ ``The temperature is high'', respectively. The shape of the functions in $P$
intuitively tells us that it is never the case that the observed temperature is both low and high. More generally,
at an intuitive level it is  clear that \emph{$P$ encodes a body $B$ of knowledge about the specific application domain}
(here, about temperature). How can
we make this intuition precise?

 If  $\mathscr{L}$ has a conjunction $\wedge$
interpreted by minimum, the proposition $X_1 \wedge X_3$ has $0$ as its only possible truth
value, \ie\
it is a contradiction. The chosen set $P$ then leads us to add  \emph{extra-logical axioms}
to $\mathscr{L}$---\textit{e.g.} $\neg (X_1 \wedge X_3)$, where $\neg$ is a negation connective---in an attempt to express the
fact that one cannot observe both a high and a low temperature at the same time. More
generally, we see that \emph{$P$ implicitly  encodes a theory $\Theta_P$ over the pure logic $\mathscr{L}$}---a \emph{theory}
 being a family of formul\ae\ required
to hold, thought of as extra-logical axioms. Crucially, the theory $\Theta_P$ is determined independently of the
specifics of the available connectives. Set
\begin{align*}\tag{*}\label{t:thetap}
\Theta_P=\left\{\varphi(X_1,\ldots,X_n)\, \mid\, \varphi(f_1(x),\ldots,f_n(x))=1 \text{ for all } x \in [0,1]\right\}.
\end{align*}
(Here, $\varphi$ is a formula of $\Log$ over the variables $X_1,\ldots,X_n$, and $\varphi(r_1,\ldots,r_n)$
denotes the evaluation of $\varphi$ at $(r_1,\ldots,r_n)\in[0,1]^n$.) Under the sole assumption that $\Log$ has a sound
 $[0,1]$-valued semantics, it is easy to show that $\Theta_P$ as in (\ref{t:thetap}) is a (deductively closed) theory over $\Log$; see Lemma \ref{l:lemmino} below.

In traditional
terminology, the axioms of the logic $\Log$ (along with their deductive consequences) are to be thought of as \emph{analytic truths},
which hold true by virtue of their form alone, independently of the circumstances. Analytic truths are
the subject matter of logic proper; however, by their very nature, they carry no information about ``the world'':
whichever analytic truth one   utters about temperature,
one can equally well utter about, say, pressure. By contrast, the additional formul\ae, or extra-logical axioms, that feature
in a theory are to be thought of as \emph{synthetic truths}---assertions that are truthful only
within a specific domain of application, by virtue of properties of that domain. So, for example,
in dealing with a certain (ideal) gas one may wish to assert as a \emph{physical} (hence extra-logical) \emph{truth}
that the product of the volume and the pressure is constant at constant temperature. But there is of course no way of
deducing such a statement from the axioms of classical logic: a world in which this specific law fails is conceivable, \ie\
is logically consistent, and hence whatever the truth expressed by the law, it is a factual, or contingent, or synthetic truth. The completeness theorem
then tells us that the statement in question is not formally derivable from the axioms of classical logic, because it has a counter-model, namely, the
possible world wherein it fails. In this precise sense, \emph{logic can teach us nothing} (factual):
 good grades in logic won't help  with your physics class.

\smallskip In light of the foregoing, we  now see how to relate the two
statements:
\begin{enumerate}
\item[(S1)] $P$ determines a theory $\Theta_P$ over $\Log$, and
\item[(S2)] $P$ encodes a body $B$ of knowledge about the specific application domain.
\end{enumerate}
Indeed,
\emph{$\Theta_P$ is none other than a verbalisation of $B$}: an exposition of $B$ in formul\ae,
so to speak. But while (S1) does provide the desired clarification of the intuition (S2),
 $\Theta_P$ may end up being a mere approximation to $B$; after all,  the linguistic resources offered by $\Log$ are limited.
The differential relationship between acceleration and velocity, for example, is hardly
exactly expressible  within most formal system that go under the name of ``logics''.

\smallskip
 In this
paper we are  concerned with one instance of the general problem of making explicit
the extra-logical information implicitly encoded by $P$. In a previous paper \cite{ijar} (see also \cite{ecsqaru}), we
addressed and solved this problem in case the background logic $\Log$ is \emph{G\"odel-Dummett
\textup{(}infinite-valued propositional\textup{)} logic} \cite[Chapter 4]{hajek}, and $P$ is assumed to be
a \emph{Ruspini partition}, \ie\ such that $\sum_{i=1}^nf_i(x)=1$ for each $x\in[0,1]$. There, we
proved that G\"odel-Dummett logic can only  capture the semantical notion
of Ruspini partition up to an equivalence relation that we determined exactly. Here, we address the
problem of identifying the  synthetic, factual content of \emph{triangular bases of fuzzy sets}, a notion
strictly stronger than Ruspini partitions. Such triangular bases
 commonly occur in applications. The set $P=\{f_1,f_2,f_3\}$ provides an example;
please see Definition \ref{def:triangularbasis} below for details. Throughout the paper, we will
take $\Log$ to be \emph{{\L}ukasiewicz \textup{(}infinite-valued propositional\textup{)} logic}
 \cite{cdm}; background is provided in Section \ref{ss:luk}. It is well known that {\L}ukasiewicz logic
 is able to express addition of real numbers exactly, and so does axiomatise Ruspini partitions
 exactly. We shall prove in Theorem II of Section \ref{s:axiom} the stronger result that {\L}ukasiewicz logic axiomatises the notion of
triangular bases of fuzzy sets almost exactly; specifically, the logic axiomatises \emph{pseudo-triangular bases}
(see Definition \ref{def:triangularbasis}), a mild weakening of triangular bases. It will
transpire that the reason why the latter cannot be characterised exactly is that the logic
does not express \emph{\textup{(}affine\textup{)} linearity}, though, as mentioned, it does express addition.
Besides a fair amount of standard machinery in \luk{} logic, the proof of Theorem II  uses Theorem I, which
we prove in Section \ref{s:char}. Here we characterise pseudo-triangular bases of fuzzy sets by
elementary properties of the set of functions $P$ which strengthen the Ruspini condition.
In the final Section \ref{s:further} we discuss further research, and connections with previous
work on the algebraic semantics of {\L}ukasiewicz logic.

\section{Properties of fuzzy sets.}\label{s:properties}
Fuzzy sets are often required to satisfy additional
conditions that are deemed useful for the specific application under
consideration. Here is a popular one that we already mentioned in the Prologue, and is usually traced back\footnote{Let us mention in passing that Ruspini partitions have been long studied in general topology, where they are known as
(\emph{finite}) {\em partitions of unity}; see \eg\ the survey \cite{partitions}, and references therein.} to \cite[p.\ 28]{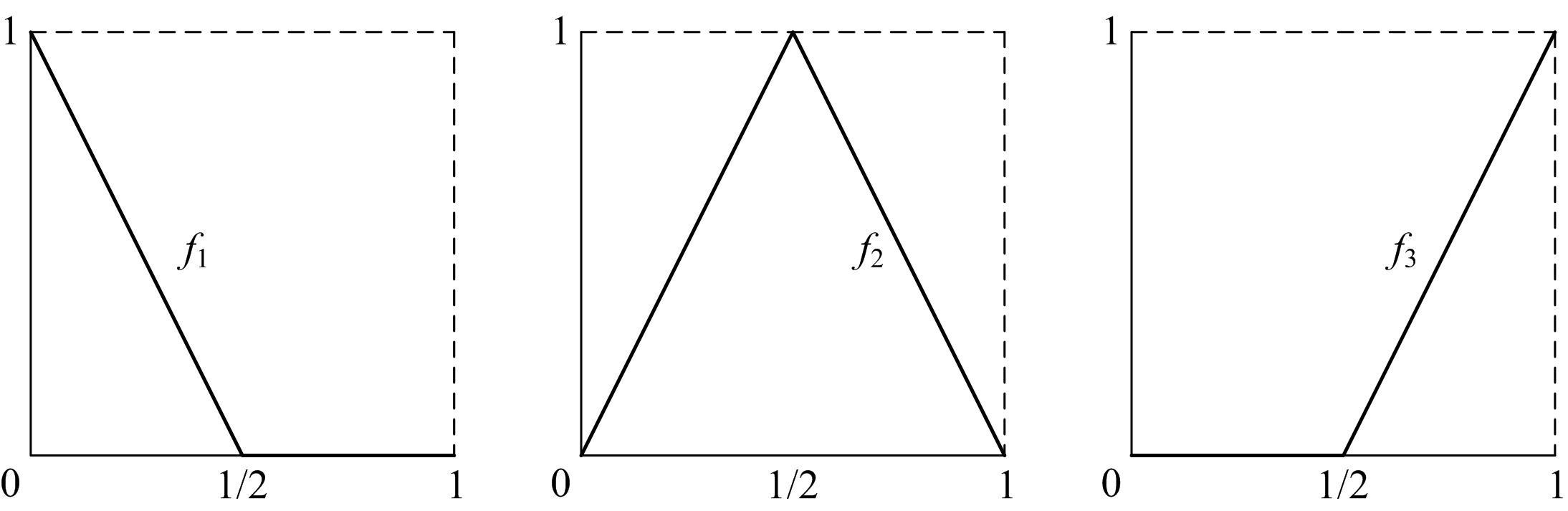}.
We say
$P$ is a \emph{Ruspini partition} if  for all $x \in [0,1]$
\begin{equation}
\label{eq:Ruspini}
\sum_{i=1}^n{f_i(x)}=1\,.
\end{equation}
We further say that $P$ is \emph{$2$-overlapping} if for all $x \in [0,1]$ and all triples
of indices $i_1\neq i_2\neq i_3$ one has
\begin{equation}\label{eq:2overlap}
\min{\{f_{i_1}(x),f_{i_2}(x),f_{i_3}(x)\}}=0 \ .
\end{equation}
Figure \ref{fig:ruspini} shows a family of fuzzy sets which is both Ruspini and
$2$-overlapping.
\begin{figure}[ht!]
        \begin{center}
                \includegraphics[scale=0.5]{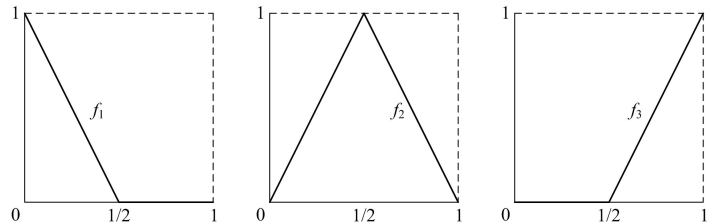}
        \end{center}
        \caption{A Ruspini and $2$-overlapping family of fuzzy sets.}
        \label{fig:ruspini}
\end{figure}

\medskip
The Ruspini and the 2-overlapping conditions (\ref{eq:Ruspini}--\ref{eq:2overlap})
apply to a family of fuzzy sets.  In the literature,
several properties applicable to a single fuzzy set have been considered too.
One of these we are assuming throughout, as stated at the beginning; namely, continuity.
Further, a fuzzy set $f \colon [0,1] \to [0,1]$ is \emph{normal} if there exist
$x \in [0,1]$ such that $f(x)=1$. If, moreover, $f(y) \neq 1$ for all $y \in [0,1]$ with $y \neq
x$, we say that $f$ is \emph{strongly normal}.
The fuzzy sets $f_1$, $f_2$, and $f_3$ depicted in Figure \ref{fig:ruspini}
are strongly normal.
The last property we wish to consider is convexity.
Following \cite[p.\ 25]{prade}, it is common to consider a weaker form of convexity than the classical one.
The fuzzy set $f\colon [0,1]\to [0,1]$ is \emph{min-convex}\footnote{We adopt this terminology to avoid confusion with convexity proper.} if for
all $x,y,\lambda \in [0,1]$,
\begin{equation}\label{eq:minconvex}
f(\lambda x + (1-\lambda)y) \geq \min(f(x),f(y)),
\end{equation}
and it is \emph{strictly min-convex} if
\begin{equation}\label{eq:sminconvex}
f(\lambda x + (1-\lambda)y) > \min(f(x),f(y)).
\end{equation}
We shall make crucial use of a localised version of min-convexity in our results. Let us call $S_f=\{x \in [0,1]\ |\ f(x)>0\}$ the \emph{support} of $f$.
We say $f$ is \emph{min-convex on its support} if (\ref{eq:minconvex}) holds for each $x,y \in [0,1]$
such that $[x,y] \subseteq S_f$. We define the notion of \emph{strict min-convexity of $f$ on its support} in the same manner,
\textit{mutatis mutandis}.

A min-convex fuzzy set is shown in Figure \ref{fig:min-con}; a  non-min-convex fuzzy set is
shown in Figure \ref{fig:nmin-con}.
\begin{figure}
 \centering
 \subfigure[A strictly min-convex fuzzy set.]
   {\label{fig:min-con}\includegraphics[scale=0.7]{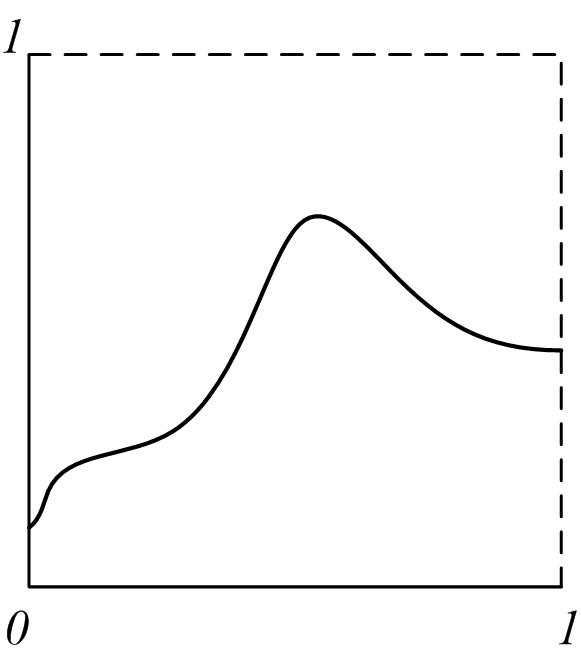}}\hspace{2cm}
 \subfigure[A non-min-convex fuzzy set.]
   {\label{fig:nmin-con}\includegraphics[scale=0.7]{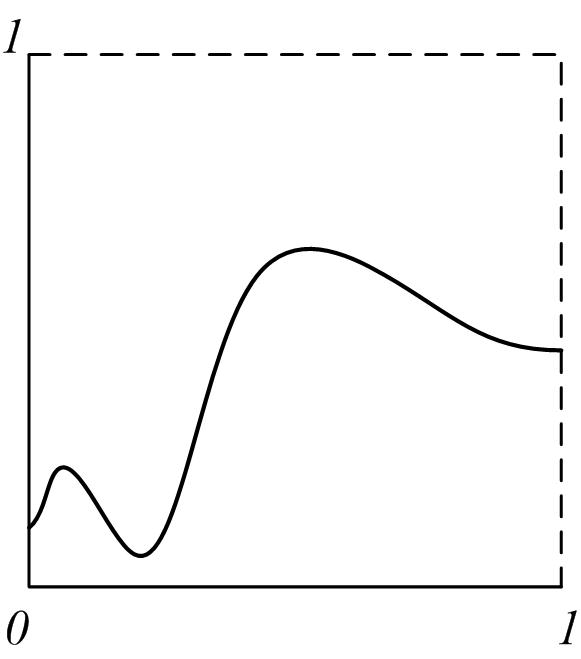}}
 \caption{Min-convexity.}
 \end{figure}

\begin{lem}\label{lem:minconvex}
A fuzzy set $f\colon [0,1]\to [0,1]$ is min-convex if, and only if,
for any $0\leq x<z<y\leq 1$ we have that
$$\mbox{if }f(z) < f(x)\mbox{ then }f(y) \leq f(z)\,.$$
Moreover, $f$ is strictly min-convex if, and only if,
for any $0\leq x<z<y\leq 1$ we have that
$$\mbox{if }f(z) \leq f(x)\mbox{ then }f(y) < f(z)\,.$$
\end{lem}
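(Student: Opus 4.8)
The plan is to prove each of the two biconditionals by a short pair of contradiction arguments, handling the strict and non-strict cases in exact parallel. The conceptual point is that min-convexity as defined by (\ref{eq:minconvex}) is precisely quasiconcavity, and the characterisation merely records that one may test the defining inequality at a single interior point against the two endpoints. I would therefore first observe that it suffices to work with ordered triples $x<z<y$: every $z$ strictly between $x$ and $y$ is the convex combination $z=\lambda x+(1-\lambda)y$ for the unique $\lambda=(y-z)/(y-x)\in(0,1)$, and conversely every strict convex combination of two distinct points is such an interior $z$.

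For the forward implication of the first equivalence I would assume $f$ min-convex and fix $0\le x<z<y\le 1$ with $f(z)<f(x)$. Writing $z=\lambda x+(1-\lambda)y$ and applying (\ref{eq:minconvex}) gives $f(z)\ge\min(f(x),f(y))$. Were $f(y)>f(z)$, then $f(x)>f(z)$ and $f(y)>f(z)$ would force $\min(f(x),f(y))>f(z)$, contradicting the previous inequality; hence $f(y)\le f(z)$. For the converse I would assume the stated condition, take arbitrary $x,y,\lambda$, and, using the symmetry of $\min$, relabel so that $x\le y$. The cases $x=y$ and $\lambda\in\{0,1\}$ are immediate, so assume $x<z<y$ with $z=\lambda x+(1-\lambda)y$. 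If (\ref{eq:minconvex}) failed here, \ie\ $f(z)<\min(f(x),f(y))$, then in particular $f(z)<f(x)$, whence the hypothesis yields $f(y)\le f(z)$, contradicting $f(z)<f(y)$.

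The strict case runs identically, with $<$ in the hypotheses replaced by $\le$ and $\le$ in the conclusions replaced by $<$. Here I would first fix the reading of (\ref{eq:sminconvex}) as the assertion that $f(\lambda x+(1-\lambda)y)>\min(f(x),f(y))$ only for $x\neq y$ and $\lambda\in(0,1)$ --- the sole reading under which it can hold, since at $\lambda\in\{0,1\}$ the left-hand side equals one of $f(x),f(y)$ and cannot strictly exceed their minimum. The forward direction then notes that $f(z)\le f(x)$ together with a hypothetical $f(y)\ge f(z)$ would give $\min(f(x),f(y))\ge f(z)$, contradicting $f(z)>\min(f(x),f(y))$; the converse derives $f(y)<f(z)$ from $f(z)\le f(x)$ and contradicts $f(z)\le f(y)$.

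I do not expect a genuine obstacle: the entire proof is bookkeeping of inequalities. The only places that warrant care are the reduction to ordered triples --- justified by the $\min$-symmetry of min-convexity and by the correspondence between interior points and strict convex combinations noted above --- and the correct interpretation of the strict inequality (\ref{eq:sminconvex}) at the endpoints, without which the strict equivalence would be false as stated.
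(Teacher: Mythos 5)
Your proof is correct and is precisely the routine verification the paper intends: the paper's own ``proof'' consists of the single sentence ``This is a straightforward verification,'' and your reduction to ordered triples followed by the four short contradiction arguments fills in exactly that omitted bookkeeping. Your explicit observation that (\ref{eq:sminconvex}) must be read as applying only when $x\neq y$ and $\lambda\in(0,1)$ is a necessary point of care rather than pedantry --- under the literal reading no function would be strictly min-convex (take $\lambda=0$ with $f(y)\leq f(x)$), so the strict half of the lemma is true only under the interior-point interpretation you fix.
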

\begin{proof}
This is a straightforward verification.
\end{proof}

There is one last property of fuzzy sets that we consider  in this paper:
\begin{defn}\label{d:separating}
A finite family $P =\{f_1,\ldots,f_n\}$ of  fuzzy sets
is \emph{separating} if for all $x,y \in [0,1]$, with $x \neq y$,
$\{f_1(x),\ldots,f_n(x)\}\neq\{f_1(y),\ldots,f_n(y)\}$.
\end{defn}
\noindent It may be remarked that many families of fuzzy sets that have been investigated in the literature, or have been used in implementations, indeed are separating --- the set in Figure \ref{fig:ruspini} being a typical instance. We shall see in due course that the property is a crucial feature of such examples, \textit{cf.}\ Theorem II below.

\medskip Instead of asking that $P$ (or its members) satisfy a given
general property such as the
ones above, we can decide to restrict the choice of fuzzy sets to a
prototypical class of functions. So, for example,
a fuzzy system might use  sigmoid, or triangular, or trapezoidal
functions only. In the case of triangular functions, moreover, it is
 common to require that the various fuzzy sets fit together
nicely, as in the following definition that is
central to our paper.

\medskip
\begin{defn}\label{def:triangularbasis}A finite family $P =
\{f_1,\ldots,f_n\}$ of continuous fuzzy sets is
a \emph{pseudo-triangular basis} if
there exist $0=t_1 < t_2 < \cdots < t_{n-1} < t_n = 1$ such that
\textup{(}up to
a permutation of the indices\textup{)} for each
$i = 1, \ldots, n-1$
\begin{enumerate}
\item[\textit{a})] $f_i(t_i)=1$, $f_i(t_{i+1})=0$,
\item[\textit{b})] $f_j(x)=0$, for $x \in [t_i,t_{i+1}]$,
$j \neq i,i+1$,
\item[\textit{c})] $f_{i+1}(x)=1 - f_{i}(x)$, for
$x \in [t_i,t_{i+1}]$, and
\item[\textit{d})] $f_i, f_{i+1}$ are bijective when restricted to
$[t_i,t_{i+1}]$.
\end{enumerate}
Further, $P$ is a \emph{triangular basis} if the following
condition holds
in place of \textit{d}).
\begin{enumerate}
\item[\textit{d}$^*$)] $f_i, f_{i+1}$ are linear over
    $[t_i,t_{i+1}]$.
\end{enumerate}
\end{defn}
See Figure \ref{fig:pseudo}  for an example of a pseudo-triangular basis of fuzzy sets.
\begin{figure}[ht!]
        \begin{center}
                \includegraphics[scale=0.5]{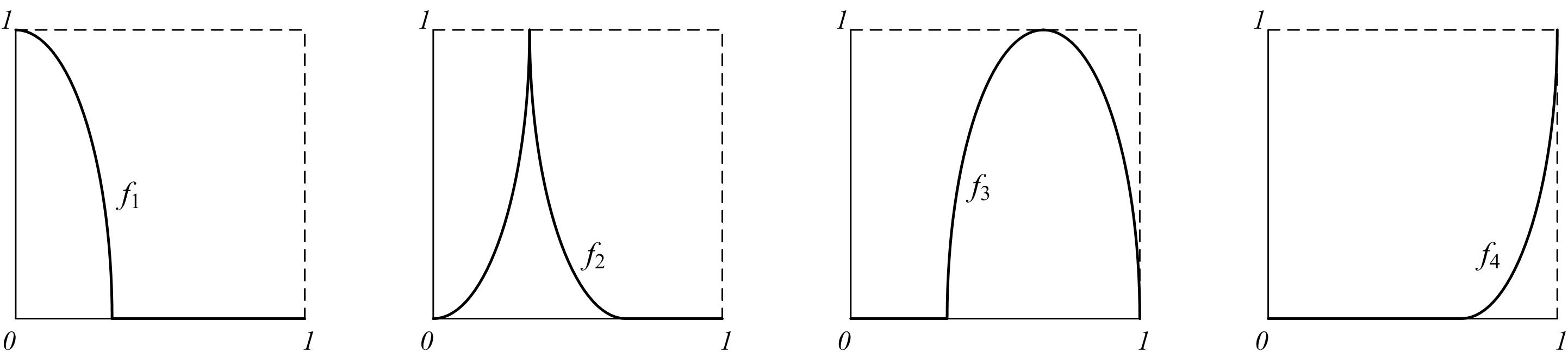}
        \end{center}
        \caption{A pseudo-triangular basis.}
        \label{fig:pseudo}
\end{figure}

\medskip \begin{rem}
It is straightforward to prove that a finite family $\{f_1,\ldots,f_n\}$
of continuous fuzzy sets is
a  triangular basis if, and only if,
there exist real numbers $0=t_1 < t_2 < \cdots < t_{n-1} < t_n = 1$
such that \textup{(}up to
a permutation of the indices\textup{)} for each
$i =1,2,\ldots, n$,
\begin{enumerate}
\item[\textit{i})] $f_i(t_i)=1$,
\item[\textit{ii})] $f_i(t_j)=0$, for $j\neq i$, and
\item[\textit{iii})] $f_i$ is linear on each interval
    $[t_k,t_{k+1}]$,
$k=1,\ldots,n-1$.
\end{enumerate}
The conditions in Definition \ref{def:triangularbasis} are somewhat more involved in order to capture instances that are not locally linear, as in Figure \ref{fig:pseudo}.
\end{rem}

\section{Characterisation of pseudo-triangular bases of fuzzy sets.}\label{s:char}

We define a continuous map
\[
T_{P}\colon [0,1] \to [0,1]^n
\]
associated with $P$ by
$$t \mapsto (f_1(t), \ldots, f_n(t))\,.$$
We write $\range T_{P} = T_{P}([0,1])$ for the range of $T_{P}$.

\medskip Recall\footnote{For background on the few basic notions from piecewise linear
geometry we use here, please see \cite{rourke}.} that the \emph{fundamental simplex} in $\R^n$, denoted by
$\Delta_n$, is the convex hull of the standard basis of $\R^n$; the
latter is denoted $\{e_1,\ldots,e_n\}$.
In symbols,
$$\Delta_n = \conv{\{e_1,\ldots,e_n\}}\,.$$
A \emph{face} of dimension $k$ of $\Delta_n$
is a subset $\conv{\{e_{i_1},\ldots,e_{i_{k+1}}\}} \subseteq \Delta_n$,
for $1\leq i_1 < i_2 < \cdots < i_{k+1} \leq n$.
A \emph{vertex} is a $0$-dimensional face.
The \emph{$1$-skeleton} of $\Delta_n$, written $\Delta_n^{(1)}$, is the
collection
of all faces of $\Delta_n$ having dimension not greater than
$1$.\footnote{Thus, $\Delta_n^{(1)}$ happens to be a graph.}

\medskip
We say $\range T_{P}$ is a \emph{Hamiltonian path} if there is a permutation
$\pi\colon\{1,\ldots,n\}\to\{1,\ldots,n\}$ such that
\begin{equation}\label{eq:hamiltoniapath}
\range T_{P} = \bigcup^{n-1}_{i=1} \conv{\{e_{\pi(i)},e_{\pi(i+1)}\}}
\end{equation}

\begin{thmI}\label{th:1}
The following are equivalent.
\begin{enumerate}
\item[i\textup{)}] $P$ is a pseudo-triangular basis.
\item[ii\textup{)}] $P$ is a $2$-overlapping Ruspini partition and each
$f_i\in P$ is strongly normal, min-convex, and strictly min-convex on its support.
\item[iii\textup{)}] The map $T_{P}: [0,1] \to [0,1]^n$ is injective, and $\range T_{P}$ is
a Hamiltonian path on $\Delta_n^{(1)}$.
\end{enumerate}
\end{thmI}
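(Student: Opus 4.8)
The plan is to prove the cycle of implications (i) $\Rightarrow$ (ii) $\Rightarrow$ (iii) $\Rightarrow$ (i). The first is a direct verification from the local structure of a pseudo-triangular basis. Assuming $P$ satisfies Definition~\ref{def:triangularbasis} with breakpoints $0 = t_1 < \cdots < t_n = 1$, on each slab $[t_i, t_{i+1}]$ conditions \textit{b}) and \textit{c}) give $\sum_j f_j = f_i + f_{i+1} = 1$ and kill all but two coordinates, so $P$ is a Ruspini partition and is $2$-overlapping. Conditions \textit{a}) and \textit{d}) show that each $f_i$ rises bijectively (hence strictly monotonically) from $0$ to $1$ on $[t_{i-1}, t_i]$ and falls bijectively from $1$ to $0$ on $[t_i, t_{i+1}]$, vanishing elsewhere; such a strictly unimodal function attains the value $1$ only at $t_i$ (strong normality) and meets the criteria of Lemma~\ref{lem:minconvex} for min-convexity, with strictness on the support $(t_{i-1}, t_{i+1})$.

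For (ii) $\Rightarrow$ (iii) I would first observe that the Ruspini condition places $\range T_{P}$ inside $\Delta_n$, while $2$-overlapping forbids three simultaneously positive coordinates, so in fact $\range T_{P} \subseteq \Delta_n^{(1)}$. The crux is injectivity. Suppose $T_{P}(x) = T_{P}(y)$ with $x < y$; lying on $\Delta_n^{(1)}$, this common value is either a vertex $e_i$---impossible by strong normality of $f_i$---or an interior point of an edge $\conv\{e_i, e_j\}$, so that $f_i(x) = f_i(y) > 0$ and $f_j(x) = f_j(y) > 0$. Min-convexity of $f_i$ and of $f_j$ yields $f_i(z) \geq f_i(x)$ and $f_j(z) \geq f_j(x)$ for all $z \in [x,y]$, whence $[x,y]$ lies in both supports; strict min-convexity on the support then upgrades these to strict inequalities for interior $z$, so $f_i(z) + f_j(z) > f_i(x) + f_j(x) = 1$, contradicting the Ruspini identity. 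Hence $T_{P}$ is injective.

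It remains to identify $\range T_{P}$ as a Hamiltonian path. Being a continuous injection from the compact interval into a Hausdorff space, $T_{P}$ is a homeomorphism onto its image, which is therefore an arc containing all $n$ vertices (each $e_i$ is hit where $f_i = 1$, by strong normality). A further appeal to Lemma~\ref{lem:minconvex} rules out the two endpoints $T_{P}(0), T_{P}(1)$ of this arc lying in the interior of an edge: were $T_{P}(0)$ interior to $\conv\{e_a, e_b\}$ with the arc running first to, say, $e_a$, then $f_b$ would be positive at $0$, vanish at the preimage of $e_a$, and later attain $1$ at the preimage of $e_b$, violating the min-convexity criterion. The endpoints are thus vertices; listing all vertices in the order the arc meets them, each open sub-arc between consecutive vertices avoids every vertex, hence lies in a single open edge of $\Delta_n^{(1)}$, and its closure is exactly that edge. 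Therefore $\range T_{P} = \bigcup_{i=1}^{n-1} \conv\{e_{\pi(i)}, e_{\pi(i+1)}\}$ for the permutation $\pi$ recording the order of the vertices, which is (\ref{eq:hamiltoniapath}).

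Finally, (iii) $\Rightarrow$ (i) is the reverse reading of this picture. The homeomorphism $T_{P}$ sends the boundary $\{0,1\}$ to the two degree-one vertices of the path and is monotone along the arc; after relabelling by $\pi$ and orienting so that $T_{P}(0) = e_1$, the preimages $t_i := T_{P}^{-1}(e_i)$ satisfy $0 = t_1 < \cdots < t_n = 1$, and $T_{P}$ carries $[t_i, t_{i+1}]$ homeomorphically onto $\conv\{e_i, e_{i+1}\}$. Reading off coordinates on this edge gives $f_i(t_i) = 1$, $f_i(t_{i+1}) = 0$, the vanishing of the remaining $f_j$, the identity $f_{i+1} = 1 - f_i$, and the bijectivity of $f_i, f_{i+1}$ on $[t_i, t_{i+1}]$---exactly conditions \textit{a})--\textit{d}). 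I expect the main obstacle to be the (ii) $\Rightarrow$ (iii) step: the injectivity argument must play all three convexity and normality hypotheses against the Ruspini identity in precisely the right combination, and the passage from \emph{injective with range in the $1$-skeleton} to \emph{Hamiltonian path} is a genuine point-set-topological argument about arcs in a graph, in which the min-convexity input pinning down the arc's endpoints is easy to overlook.
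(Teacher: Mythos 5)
Your proposal is correct, but it traverses the equivalence cycle in the opposite direction from the paper: you prove \textit{i}) $\Rightarrow$ \textit{ii}) $\Rightarrow$ \textit{iii}) $\Rightarrow$ \textit{i}), while the paper proves \textit{i}) $\Rightarrow$ \textit{iii}) $\Rightarrow$ \textit{ii}) $\Rightarrow$ \textit{i}), so none of your three legs coincides with a leg of the published proof. The genuinely new content is your step \textit{ii}) $\Rightarrow$ \textit{iii}), which the paper never needs to carry out. There, injectivity of $T_{P}$ must be extracted from the function-theoretic hypotheses, and your argument is sound and absent from the paper: at a putative double point interior to an edge $\conv\{e_i,e_j\}$, min-convexity keeps $[x,y]$ inside both supports, and strict min-convexity on the support then forces $f_i(z)+f_j(z)>1$ at interior $z$, against the Ruspini identity, while a double point at a vertex is excluded by strong normality. (In the paper, injectivity in \textit{i}) $\Rightarrow$ \textit{iii}) comes directly from the bijectivity condition \textit{d}), and in \textit{iii}) it is simply a hypothesis.) Your identification of $\range T_{P}$ as a Hamiltonian path is likewise a different, more topological argument: you use that a continuous injection of $[0,1]$ into a Hausdorff space is a homeomorphism onto an arc, pin the arc's endpoints to vertices via Lemma \ref{lem:minconvex}, and let connectedness place each open sub-arc between consecutive vertex-preimages inside a single open edge; the paper instead argues with the intermediate value theorem and the explicit monotonicity relations (\ref{eq:th1.1})--(\ref{eq:th1.3}). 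Your \textit{iii}) $\Rightarrow$ \textit{i}) then reads conditions \textit{a})--\textit{d}) off the homeomorphism, where the paper routes through \textit{ii}) with a sequence of \textit{absurdum} arguments based on Lemma \ref{lem:minconvex}. As for what each approach buys: the paper's orientation stays elementary and self-contained at the level of real functions and the intermediate value theorem, whereas yours is globally slicker but quietly invokes standard point-set facts (a continuous map from a compact space to a Hausdorff space is closed, homeomorphisms of arcs preserve or reverse the intrinsic order --- needed for your claim that the $t_i=T_{P}^{-1}(e_i)$ come out increasing --- and the components of $\Delta_n^{(1)}$ minus its vertices are exactly the open edges) that a careful write-up should cite or prove; it is worth noting that the closed-map ingredient does appear in the paper, but only later, in the proof of Theorem II.
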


\begin{proof}
Labels \textit{a}), \textit{b}),
\textit{c}), and \textit{d})  in this proof  refer to the items in Definition \ref{def:triangularbasis}.

\medskip
\textit{i}) $\Rightarrow$ \textit{iii}).
By  \textit{b}) and \textit{c}), we immediately obtain that
\begin{align}\label{eq:Theta_in_Delta}
\range T_{P} \subseteq \Delta_n^{(1)}\,.
\end{align}
By \textit{a}), there exist
$0=t_1<t_2<\cdots<t_n=1$ such that,
up to a permutation of the indices,
\begin{align}\label{eq:T(t_i)}
T_{P}(t_i)=e_i\,,\ \mbox{for each}\ i =1,\ldots,n\,.
\end{align}
Let us fix an interval $[t_i,t_{i+1}]$, for some $i \in \{1,\ldots,n-1\}$. By (\ref{eq:Theta_in_Delta}--\ref{eq:T(t_i)}) and \textit{b}), $T_{P}([t_i,t_{i+1}]) \subseteq \conv{\{e_i,e_{i+1}\}}$.
Again by (\ref{eq:T(t_i)}), since $T_{P}$ is continuous, using the intermediate value theorem we obtain
\begin{align}\label{eq:T=conv}
T_{P}([t_i,t_{i+1}]) = \conv{\{e_i,e_{i+1}\}}\,.
\end{align}
Thus,
\[
\range T_{P} = \bigcup^{n-1}_{i=1} T([t_i,t_{i+1}]) = \bigcup^{n-1}_{i=1} \conv{\{e_{i},e_{i+1}\}}\,,
\]
that is, $\range T_{P}$ is a Hamiltonian path.

\smallskip
It remains to show that $T_{P}$ is injective. If not (\textit{absurdum hypothesis}),
there exist $x,y\in [0,1]$, with $x < y$ such that $T_{P}(x)=T_{P}(y)$.
Then (\ref{eq:T=conv}) entails that  $x,y \in [t_i,t_{i+1}]$, for some $i$.
But the fact
that $f_i(x)=f_i(y)$ and $f_{i+1}(x)=f_{i+1}(y)$, for $x\neq y$, contradicts
\textit{d}).

\medskip
\textit{iii}) $\Rightarrow$ \textit{ii}).
Since $\range T_{P} \subseteq \Delta_n$, we have $\sum_{i=1}^n{f_i(x)}=1$
for all $x \in [0,1]$, that is, $P$ is a Ruspini partition.
Since $\range T_{P} \subseteq
\Delta_n^{(1)}$, $(f_1(x),\ldots,f_n(x))$ has at most $2$ non-zero
coordinates, for each $x \in [0,1]$, that is, $P$ is $2$-overlapping.
By the definition of Hamiltonian path,
$\range T_{P}$ contains all vertices of $\Delta_n^{(1)}$.
Thus, each $f_i$ is normal. Since, moreover, $T_{P}$ is injective,
each $f_i$ is strongly normal.

\smallskip
Up to a permutation
of the indices, there exist $t_1 < t_2 < \cdots < t_{n-1} < t_n$, such
that $T_{P}(t_i)=e_i$, for each $i=1,\ldots,n$. Moreover, by the intermediate value theorem we
have $T_{P}([t_{i},t_{i+1}])\supseteq\conv{\{e_{i},e_{i+1}\}}$, $i=1,\ldots,n-1$. But since $T_{P}$ is injective it follows at once that
\begin{equation}\label{eq:intermediate}
T_{P}([t_{i},t_{i+1}])=\conv{\{e_{i},e_{i+1}\}}, \ i=1,\ldots,n-1.
\end{equation}
Now (\ref{eq:intermediate}) implies, for each $i\in\{2,\dots,n-1\}$,
\begin{align}
\label{eq:th1.1}f_i(x)<f_i(y)\,, \text{ for } t_{i-1} \leq x < y \leq t_i\,;\\
\label{eq:th1.2}f_i(x)>f_i(y)\,, \text{ for } t_{i} \leq x < y \leq t_{i+1}\,;\\
\label{eq:th1.3}f_i(x)=0\,, \text{ for } x \leq t_{i-1} \text{ or } x \geq t_{i+1}\,.
\end{align}
Indeed, (\ref{eq:th1.3}) is immediate, and  (\ref{eq:th1.1}) follows from the injectivity of $T_{P}$: if $f_{i}(x)=f_{i}(y)$ then $f_{i-1}(x)=1-f_{i}(x)=1-f_{i}(y)=f_{i-1}(y)$, so that $T_{P}(x)=T_{P}(y)$, a contradiction. The proof of (\ref{eq:th1.2}) is analogous.
Similar arguments show that, for each $0\leq x < y \leq 1$, $f_1(x)\geq f_1(y)$ and $f_n(x)\leq f_n(y)$.

We can now show that $f_{i}$ is min-convex, for each $i=1,\ldots,n$. By  Lemma \ref{lem:minconvex}
it suffices to show  that whenever  $0\leq x < y < z \leq 1$, and
$f_i(x)>f_i(y)$, then
$f_i(y)\geq f_i(z)$. The cases $i=1$ and $i=n$ are trivial;
assume  $1<i<n$. Since $f_i(x)>0$, we have $t_{i-1} < x < t_{i+1}$ by (\ref{eq:th1.3}).
If $x \geq t_{i}$, by (\ref{eq:th1.2}) and (\ref{eq:th1.3}),
$f_i(y)\geq f_i(z)$. If $x < t_{i}$, then, by (\ref{eq:th1.1}),
$y > t_{i}$. Using (\ref{eq:th1.2}--\ref{eq:th1.3}), we obtain
$f_i(y)\geq f_i(z)$. In each case, if $f_i(x)>f_i(y)$, then
$f_i(y)\geq f_i(z)$. A similar argument using (\ref{eq:th1.1}--\ref{eq:th1.3}) and Lemma \ref{lem:minconvex}
 proves that each $f_i$ is strictly
min-convex on its support.

\bigskip
\textit{ii}) $\Rightarrow$ \textit{i}).
Since each $f_i$ is strongly normal, and $P$ is Ruspini, there exist
$0 \leq t_1 < t_2 < \cdots < t_{n-1} < t_n \leq 1$ such that (up to
a permutation of the indices) for each
$i =1,\ldots,n$ we have
\begin{align}\label{eq:f_i}
f_i(t_i)=1,\ f_i(t_j)=0,\ \mbox{for}\ j \neq i\,.
\end{align}
Moreover, $t_1=0, t_n=1$. For suppose $t_1 >0$ (\textit{absurdum hypothesis}). Then  $f_i(0)<1$ for each $i =1,\ldots,n$. Since $\sum_{i=1}^n{f_i(0)}=1$, and since $P$ is $2$-overlapping,
there are
 exactly two indices $h>k \in \{1,\dots,n\}$ such that $f_h(0),$ $f_k(0)>0$.
Moreover, since $h > 1$, by (\ref{eq:f_i}) we have $f_h(t_1)=0$ and
$f_h(t_h)=1$. By Lemma \ref{lem:minconvex},
we conclude that $f_h$ is not min-convex, a contradiction. Thus $t_1 = 0$.
A similar argument shows  $t_n=1$. Summing up, there exist $0 = t_1 < t_2 < \cdots < t_{n-1} < t_n= 1$ such that
(\ref{eq:f_i}) holds. It immediately follows that \textit{a}) holds, too.
In order to prove \textit{b}), \textit{c}), and \textit{d})
let us fix an interval $[t_i,t_i+1]$, for  $i = 1, \dots, n-1$.

\medskip
To prove \textit{b}), suppose by way of contradiction that there exists $j \neq
i, i+1$ such that
$f_j(x) > 0$ for some $x \in [t_i,t_{i+1}]$. Say $j < i$.
Since, by (\ref{eq:f_i}), $x \neq t_i,t_{i+1}$,
we have that, on $t_j < t_i < x$, $f_j$ takes values $f_j(t_j)=1$, $f_j(t_i)=0$,
$f_j(x)>0$. By Lemma \ref{lem:minconvex}, $f_j$ is not min-convex, a contradiction.
The argument for $j>i$ is analogous,
and condition \textit{b}) is proved.

\medskip
From \textit{b}) and the hypothesis that $P$ is Ruspini,
we immediately obtain \textit{c}).

\medskip
It remains to prove  \textit{d}).
By (\ref{eq:f_i}),
$f_i(t_i)=f_{i+1}(t_{i+1})=1$ and $f_i(t_{i+1})=f_{i+1}(t_i)=0$.
Moreover, since $f_i$ and $f_{i+1}$ are strongly normal, and $P$ is Ruspini,
using \textit{b}) we have
\begin{equation}\label{eq:support}0<f_i(x),f_{i+1}(x)<1\,,\ \mbox{for all}\
x \in (t_i,t_{i+1})\,.\end{equation}
Since $f_i, f_{i+1}$ are continuous, by the intermediate
value theorem they are surjective when restricted to $[t_i,t_{i+1}]$. Suppose
now
that there exist $y < z \in (t_i,t_{i+1})$
such that $f_i(y)=f_i(z)$ (\textit{absurdum hypothesis}).
Observe that, by (\ref{eq:support}), $[y,z]$ is contained in the support of
$f_i$ and $f_{i+1}$.
Pick $w \in (y,z)$.
If $f_i(w) \leq f_i(y)$, then, by Lemma \ref{lem:minconvex}, $f_i$ is not
strictly min-convex
on its support, a contradiction.
If $f_i(w) > f_i(y)$, then, by
 \textit{c}),
$$f_{i+1}(w)=1-f_i(w) < 1-f_{i}(y)=f_{i+1}(y)=f_{i+1}(z)\,.$$
Thus $f_{i+1}$ is not strictly min-convex on its support,
a contradiction. Therefore, $f_i$ and $f_{i+1}$ are injective, and \textit{d})
holds.
\end{proof}

\section{{\it Intermezzo}: \luk{} logic.}\label{ss:luk}
\emph{{\L}ukasiewicz \textup{(}infinite-valued propositional\textup{)} logic} is a non-classical many-valued system going back to the 1920's, cf.\ the early survey  \cite[\S 3]{luktarski}, and its annotated  English translation  in \cite[pp.\ 38--59]{tarski}.
 The standard modern reference for  {\L}ukasiewicz logic is \cite{cdm}, while  \cite{mundicibis} deals
with topics at the frontier of current research. {\L}ukasiewicz logic can also be regarded as a member of a larger hierarchy of many-valued logics that was systematised by Petr H\'{a}jek in the late Nineties, cf.\ \cite{hajek}. Let us recall some basic notions.

\smallskip Let us fix once and for all the countably infinite set of propositional variables:
\[
\Var = \{X_1,X_2,\ldots,X_n,\ldots\}\,.
\]
Let us  write $\bot$ for the logical constant {\it falsum}, $\neg$ for the unary negation connective, and $\to$ for the binary implication connective. (Further derived connectives are introduced below.) The set $\Form$ of (well-formed) formul\ae\footnote{A set of conventions for omitting parentheses in formul\ae\ is usually adopted, and later extended to derived connectives. We do not spell the details here, as the conventions are analogous to the ones in classical logic, and are unlikely to cause confusion.} is defined exactly as in classical logic over the language $\{\bot,\neg,\to \}$.

\smallskip The {\L}ukasiewicz calculus is defined by the five\footnote{In  \cite[Chapter 4]{cdm} the language has no logical constants, and consequently (A0) does not appear as an axiom. We prefer to explicitly have $\bot$ in the language, and thus we add {\it Ex falso quodlibet} to the standard axiomatisation.} axiom schemata
\begin{itemize}
\item[(A0)] $\bot \to \alpha$ \hfill (\textit{Ex falso quodlibet}.)
\item[(A1)] $\alpha\to(\beta\to\alpha)$ \hfill(\textit{A fortiori}.)
\item[(A2)] $(\alpha\to\beta)\to((\beta\to\gamma)\to(\alpha\to\gamma))$ \hfill(Implication is transitive.)
\item[(A3)] $((\alpha\to\beta)\to\beta)\to((\beta\to\alpha)\to\alpha)$ \hfill(Disjunction is commutative.)
\item[(A4)] $(\neg\alpha\to\neg\beta)\to(\beta\to\alpha)$ \hfill(Contraposition.)
\end{itemize}
with {\it modus ponens} as the only deduction rule. Provability is  defined exactly as in classical logic; $\vdash \alpha$ means that formula $\alpha$ is provable.
We write $\mathscr{L}$ to denote {\L}ukasiewicz logic.

The logical constant \textit{verum} ($\top$), the conjunction ($\wedge$), the disjunction ($\vee$), and the biconditional ($\leftrightarrow$) are defined as in Table \ref{table:derivedconnectives}. From the definition of disjunction  one sees that (A3) indeed asserts the commutativity of disjunction.
  Other common  derived connectives are reported in the same table, with their definition.
  \begin{table}[htf]
\begin{center}\begin{tabular}{|c|c|c|c|}
\hline {\bf Notation} & {\bf Definition} & {\bf Name}  & {\bf Idempotent} \\
\hline\hline
$\bot$& -- & {\it Falsum} &--\\
\hline
$\top$& $\neg \bot$  & {\it Verum}   &--\\
\hline
$\neg \alpha$ & -- & Negation  & --\\
\hline$\alpha\to \beta$ & -- & Implication  &--\\
\hline
$\alpha\vee \beta$ & $(\alpha\to \beta)\to \beta$ & (Lattice) Disjunction  & Yes\\
\hline
$\alpha\wedge \beta$ & $\neg(\neg\alpha \vee \neg\beta)$ & (Lattice) Conjunction  &Yes\\
\hline
$\alpha\leftrightarrow \beta$ & $(\alpha \to \beta)\wedge (\beta \to \alpha)$  &  Biconditional  &--\\
\hline $\alpha\oplus \beta$ & $\neg \alpha\to \beta$ &  Strong disjunction & No \\
\hline $\alpha\odot \beta$  & $\neg (\alpha\to \neg \beta)$ &  Strong conjunction & No\\
\hline $\alpha\ominus \beta$  & $\neg (\alpha\to \beta)$ &  But not, or Difference & -- \\
\hline \end{tabular} \smallskip
\end{center}
\caption{Connectives in {\L}ukasiewicz logic.}\label{table:derivedconnectives}
\end{table}
  Some remarks are in order. Using the biconditional, one defines formul\ae\ $\alpha,\beta \in \Form$ to be \emph{logically equivalent} just in case $\vdash \alpha \leftrightarrow \beta$ holds. The connectives $\odot$ and $\oplus$ are then De Morgan dual: $\alpha \oplus \beta$ is logically equivalent to $\neg (\neg\alpha\odot\neg\beta)$, and $\alpha \odot \beta$ is logically equivalent to $\neg (\neg\alpha\oplus\neg\beta)$. These connectives, known as the \emph{strong disjunction} ($\oplus$) and \emph{strong conjunction} ($\odot$) of $\mathscr{L}$, play a central r\^ole both in H\'ajek's treatment of many-valued logics \cite{hajek}, and in Chang's algebraisation of $\mathscr{L}$ via \emph{MV-algebras} \cite{cdm}. They are not idempotent,
 in the sense that  $\alpha \oplus \alpha$ and $\alpha$ are not logically equivalent: only the implication $\alpha\to\alpha\oplus\alpha$ is provable; dual considerations apply to $\odot$. Conjunction ($\wedge$) and disjunction ($\vee$) also are De Morgan dual, but they are  idempotent; in fact, they are sometimes called the \emph{lattice connectives} because they induce the structure of a distributive lattice in the algebraic semantics of $\mathscr{L}$. Finally, the connective $\ominus$ is the co-implication, \ie\ the dual to $\to$.

If $S\subseteq \Form$ is any set of formul\ae, one  writes $S\vdash \alpha$ to mean that $\alpha$ is provable  in {\L}ukasiewicz logic, under the additional set of  assumptions $S$. When this is the case, one says that $\alpha$ is a \emph{syntactic consequence} of $S$. Since each one of (A0--A4) is a principle of classical reasoning, and since \textit{modus ponens} is a classically valid rule of inference,  each formula provable  in $\mathscr{L}$ is a theorem of classical propositional logic. The converse is not true: most notably, it is not hard to show that the {\it tertium non datur} law, $\alpha\vee\neg \alpha$, is not provable in {\L}ukasiewicz logic. In fact, it can be shown that the addition of $\alpha\vee\neg \alpha$ as a sixth axiom schema to (A0--A4) yields classical logic.

\smallskip By a \emph{theory} in {\L}ukasiewicz logic one means any set of formul\ae\ that is closed under provability, \ie\ is deductively closed. For any $S \subseteq \Form$,  the smallest theory that extends $S$ exists: it is the \emph{deductive closure} $S^\vdash$ of $S$, defined by $\alpha \in S^\vdash$ if, and only if, $S\vdash \alpha$.
A theory $\Theta$ is \emph{consistent} if $\Theta\not = \Form$, and \emph{inconsistent} otherwise.
A theory $\Theta$ is \emph{axiomatised by a set $S\subseteq \Form$ of formul\ae} if it so happens that $\Theta=S^\vdash$; and $\Theta$ is \emph{finitely axiomatisable} if $S$ can be chosen finite.

\smallskip
Let us now turn to the $[0,1]$-valued semantics. An \emph{atomic assignment},
or \emph{atomic evaluation}, is an arbitrary function $\overline{w}\colon \Var \to [0,1]$.  Such an atomic evaluation is uniquely extended to an \emph{evaluation} of all formul\ae, or \emph{possible world}, \ie\ to
a function $w\colon \Form \to [0,1]$, via the compositional rules:
\begin{align*}
w(\bot)&=0\,,  \\
w(\alpha\to\beta)&=\min{\{1,1-(w(\alpha)-w(\beta))\}}\,,\\
w(\neg\alpha)&=1-w(\alpha)\,.
\end{align*}
It follows by trivial computations that  the
formal semantics of derived connectives is the one reported in Table \ref{table:01connectives}.
\emph{Tautologies} are defined as those formul\ae\ that evaluate to $1$ under every evaluation.
\begin{table}[htf]
\begin{center}\begin{tabular}{|c|c|c|}
\hline {\bf Notation}    & {\bf Formal semantics} \\
\hline\hline
$\bot$& $w(\bot)=0$\\
\hline
$\top$  & $w(\top)=1$\\
\hline
$\neg \alpha$   & $w(\neg\alpha)=1-w(\alpha)$\\
\hline$\alpha\to \beta$   & $w(\alpha\to\beta)=\min{\{1,1-(w(\alpha)-w(\beta))\}}$\\
\hline
$\alpha\vee \beta$   & $w(\alpha\vee \beta)=\max{\{w(\alpha),w(\beta)\}}$\\
\hline
$\alpha\wedge \beta$   &$w(\alpha\wedge \beta)=\min{\{w(\alpha),w(\beta)\}}$\\
\hline
$\alpha\leftrightarrow \beta$ &$w(\alpha\leftrightarrow\beta)=1-|w(\alpha)-w(\beta)|$\\
\hline $\alpha\oplus \beta$  & $w(\alpha\oplus\beta)= \min{\{1,w(\alpha)+w(\beta)\}}$ \\
\hline $\alpha\odot \beta$   & $w(\alpha\odot\beta)= \max{\{0,w(\alpha)+w(\beta)-1\}}$\\
\hline $\alpha\ominus \beta$    & $w(\alpha\ominus\beta)= \max{\{0,w(\alpha)-w(\beta)\}}$ \\
\hline \end{tabular} \smallskip
\end{center}
\caption{ Formal semantics of connectives in {\L}ukasiewicz logic.}\label{table:01connectives}

\end{table}
Let us  write $\vDash \alpha$ to mean that the formula $\alpha\in \Form$ is a tautology. The relativisation of this concept to theories leads to the notion of semantic consequence. Let $S\subseteq \Form$ be any subset, and let $\Theta=S^\vdash$ be its associated theory. Given $\alpha\in \Form$, the assertion  $S\vDash \alpha$ states  that any evaluation $w\colon \Form\to [0,1]$ that satisfies $w(S)=\{1\}$ --- meaning that $w(\beta)=1$ for each $\beta \in S$ --- must also satisfy $w(\alpha)=1$. When this is the case, we say that $\alpha$ is a \emph{semantic consequence} of $S$. We write $S^{\vDash}$ for the set of semantic consequences of $S$.

\smallskip
It is an exercise to check that $\mathscr{L}$ enjoys the generalised validity theorem: for any $S \subseteq \Form$ and any $\alpha \in \Form$, if $S \vdash \alpha$ then $S \vDash \alpha$. (For a proof, see \cite[4.5.1]{cdm}.) On the other hand,   it is a  non-trivial theorem that $\mathscr{L}$ is complete\footnote{However, $\mathscr{L}$ fails strong completeness (\ie\ completeness for theories): there is a set $S\subseteq \Form$ and a formula $\alpha \in \Form$ such that $S\vDash\alpha$, but $S \not\vdash \alpha$; see \cite[4.6]{cdm}.} with respect to the many-valued semantics above: hence $\vdash \alpha$ if, and only if, $\vDash \alpha$, for any $\alpha \in \Form$. The first proof of this appeared in  \cite{rose_rosser}; see also  \cite[4.5.1 \& 4.5.2]{cdm}.

\smallskip
All of  the above can be adapted in the obvious manner to the finite set $\Var_n=\{X_1,\ldots, X_n\}$, in which case one speaks of {\L}ukasiewicz logic \emph{over $n$ \textup{(}propositional\textup{)} variables}, denoted $\mathscr{L}_n$.  The results in the sequel are formulated for $\mathscr{L}_{n}$, though they do admit extension to $\mathscr{L}$.
Although, strictly speaking, one should introduce fresh consequence relation symbols $\vdash_{n}$ and $\vDash_{n}$ for  $\mathscr{L}_{n}$, we will avoid this pedantry and use  $\vdash$ and $\vDash$ instead.  We will write $\Form_n$ for the set of   formul\ae\ whose propositional variables are contained in $\Var_n$.
\section{From functions to logic: theories induced by fuzzy sets.}\label{s:theory}
The following is a  detailed definition of $\Theta_P$ as in (\ref{t:thetap}).
\begin{defn}\label{d:thetap}(1) An assignment $\mu\colon \Form_n \to [0,1]$ is \emph{realised by $P$} (\emph{at
$x \in [0,1]$}) if $\mu(X_i)=f_i(x)$ for each $i=1,\ldots,n$.

(2) The \emph{theory $\Theta_P\subseteq \Form_n$ associated with $P$} is defined as the set of formul\ae\
$\varphi \in \Form_n$ such that $\mu(\varphi)=1$ whenever the assignment $\mu\colon \Form_n \to [0,1]$
is  realised by $P$.
\end{defn}
We record a simple fact for later use.
\begin{lem}\label{l:lemmino} The set $\Theta_P\subseteq \Form_n$ as in
Definition \ref{d:thetap} is indeed a theory, \ie\ a  deductively closed set of formul\ae.
\end{lem}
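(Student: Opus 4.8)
The plan is to observe that $\Theta_P$ is, by its very definition, the set of formul\ae\ that evaluate to $1$ under a fixed \emph{class} of evaluations, and then to invoke soundness (the generalised validity theorem recalled in Section~\ref{ss:luk}) to conclude that any such set is automatically deductively closed. Completeness plays no role here; only the easy, sound direction $S\vdash\alpha \Rightarrow S\vDash\alpha$ is needed.

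First I would make the class explicit. Since an assignment realised by $P$ is an evaluation whose values on the variables are pinned down to $\mu(X_i)=f_i(x)$ for some $x\in[0,1]$, the compositional rules of Section~\ref{ss:luk} determine it uniquely; call it $\mu_x\colon\Form_n\to[0,1]$. Thus the realising assignments are exactly the members of the family $W=\{\mu_x\mid x\in[0,1]\}$, and Definition~\ref{d:thetap} reads
\[
\Theta_P=\{\varphi\in\Form_n\mid \mu_x(\varphi)=1 \text{ for all } x\in[0,1]\}=\bigcap_{x\in[0,1]}\{\varphi\mid \mu_x(\varphi)=1\}.
\]

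Next I would verify deductive closure directly. Suppose $\Theta_P\vdash\varphi$; I must show $\varphi\in\Theta_P$, that is, $\mu_x(\varphi)=1$ for every $x\in[0,1]$. By the generalised validity theorem, $\Theta_P\vdash\varphi$ yields $\Theta_P\vDash\varphi$: every evaluation sending all of $\Theta_P$ to $1$ also sends $\varphi$ to $1$. Now fix $x\in[0,1]$. By the very definition of $\Theta_P$, each $\psi\in\Theta_P$ satisfies $\mu_x(\psi)=1$, so $\mu_x(\Theta_P)=\{1\}$. Hence $\mu_x$ is an evaluation validating the hypotheses of the semantic-consequence relation $\Theta_P\vDash\varphi$, whence $\mu_x(\varphi)=1$. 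As $x$ was arbitrary, $\varphi\in\Theta_P$, and therefore $\Theta_P=\Theta_P^{\vdash}$.

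There is no real obstacle here: the statement is simply an instance of the general fact that, in any logic equipped with a sound semantics, the set of formul\ae\ valid on a prescribed family of models is a theory. The only point deserving care is the trivial-but-essential observation that each realising assignment $\mu_x$ satisfies the whole of $\Theta_P$ by construction, so that it indeed qualifies as a witnessing evaluation for $\Theta_P\vDash\varphi$; everything else is an immediate appeal to soundness.
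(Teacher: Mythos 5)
Your proposal is correct and follows exactly the same route as the paper's own proof: apply the generalised validity theorem to pass from $\Theta_P\vdash\varphi$ to $\Theta_P\vDash\varphi$, then note that every assignment realised by $P$ evaluates all of $\Theta_P$ to $1$ by definition, so it must also evaluate $\varphi$ to $1$. Your extra step of writing the realising assignments explicitly as the family $\{\mu_x\mid x\in[0,1]\}$ is a harmless elaboration, not a different argument.
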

\begin{proof}For suppose $\varphi\in\Form_n$
is such that $\Theta_P\vdash\varphi$.
Since {\L}ukasiewicz logic is sound with respect to $[0,1]$-valued
assignments \cite[4.5.1]{cdm}, it follows that $\Theta_P\vDash\varphi$. If now the assignment
$\mu$ is realised by $P$, by definition it evaluates to $1$ each formula in $\Theta_P$; from $\Theta_P\vDash\varphi$ we have
$\mu(\varphi)=1$, too, and therefore $\varphi \in \Theta_{P}$. Hence $\Theta_P$ is
a theory.
\end{proof}
\begin{rem}\label{r:cons}Observe that the theory $\Theta_{P}$ in Lemma \ref{l:lemmino} is always consistent: no assignment at all $\mu\colon \Form_n \to [0,1]$
satisfies $\mu(\bot)=1$, hence $\bot\not \in \Theta_P$.\end{rem}

\begin{rem}\label{r:otherlogics}Notice that, as stated in the Prologue, Lemma \ref{l:lemmino} would hold (by the above proof) for any $[0,1]$-valued logic that satisfies the minimal requirement of soundness with respect to $[0,1]$-valued assignments. Also note that the finiteness of $P$ plays no r\^{o}le in the proof. {\it In conclusion, any
given collection of fuzzy sets gives rise to specific theory in any given $[0,1]$-valued logic.}
\end{rem}
For an application of Remark \ref{r:otherlogics} in the context of theories of vagueness, the interested reader may consult \cite{erkennt}.

\section{How to axiomatise a pseudo-triangular basis of fuzzy sets.}\label{s:axiom}
In this  section we throughout work with {\L}ukasiewicz logic over $n$ propositional variables, $\mathscr{L}_{n}$.
We prepare the following formul\ae\ in $\Form_{n}$.
\begin{align}
   \label{form:1}\rho &\,=\, X_1\oplus X_2\oplus \cdots \oplus X_n\,, &  \\
   \label{form:2}\alpha_{ij} &\,=\, \neg(X_i \odot X_j)  \,, & \text{ for $i,j=1,\ldots,n$, and $|i-j|=1$,}\\
   \label{form:3}\beta_{ij} &\,=\, \neg (X_i \wedge X_j)  \,, & \text{ for $i,j=1,\ldots,n$, and $|i-j|>1$.}
\end{align}
We further set
\begin{align*}\label{form:all}
\Ax=&\left\{\rho \right\} \cup \{ \alpha_{ij} \mid  i,j=1,\ldots, n, \text{ and } |i-j|=1\} \,\cup\\
&\cup \{\beta_{ij} \mid i,j=1,\ldots, n, \text{ and } |i-j|>1\}\,.
\end{align*}
By the \emph{$1$-set} of a formula $\varphi \in \Form_{n}$ we mean the following subset of $[0,1]^{n}$:
\begin{align*}
\left\{(x_{1},\ldots,x_{n})\in [0,1]^{n}\ \mid\ \mu_{\vec{x}}(\varphi)=1  \right\},
\end{align*}
where $\mu_{\vec{x}}\colon \Form_n \to [0,1]$ is the unique evaluation extending the assignment $X_{1}\mapsto x_{1}, \ldots, X_{n}\mapsto x_{n}$. The $1$-set of a finite set of formul\ae\ $\{\varphi_{1},\ldots,\varphi_{m}\}$, moreover, is defined to be  the $1$-set of the formula $\varphi_{1}\wedge\cdots\wedge\varphi_{m}$, or equivalently, the intersection of the $1$-sets of $\varphi_{i}$, $i=1,\ldots,m$.

\smallskip
To prove our Theorem II, the following lemma is needed.
\begin{lem}\label{lem:1-set}
The $1$-set of the set of formul\ae\ $\Ax$
is precisely the Hamiltonian path $\bigcup^{n-1}_{i=1} \conv{\{e_{i},e_{i+1}\}}$.
\end{lem}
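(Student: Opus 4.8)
The plan is to compute the $1$-set of $\Ax$ directly from the semantics in Table~\ref{table:01connectives}, and to show it coincides with $\bigcup_{i=1}^{n-1}\conv\{e_i,e_{i+1}\}$ by a double inclusion. Since the $1$-set of a finite collection is the intersection of the individual $1$-sets, I first translate each of the three families of axioms into a concrete constraint on a point $\vec{x}=(x_1,\ldots,x_n)\in[0,1]^n$. For $\rho = X_1\oplus\cdots\oplus X_n$, the strong-disjunction semantics $w(\alpha\oplus\beta)=\min\{1,w(\alpha)+w(\beta)\}$ gives $\mu_{\vec{x}}(\rho)=\min\{1,\sum_i x_i\}$, so $\mu_{\vec{x}}(\rho)=1$ exactly when $\sum_{i=1}^n x_i\geq 1$. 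For $\alpha_{ij}=\neg(X_i\odot X_j)$ with $|i-j|=1$, using $w(\alpha\odot\beta)=\max\{0,w(\alpha)+w(\beta)-1\}$ and negation, I get $\mu_{\vec{x}}(\alpha_{ij})=1-\max\{0,x_i+x_j-1\}$, which equals $1$ exactly when $x_i+x_j\leq 1$. For $\beta_{ij}=\neg(X_i\wedge X_j)$ with $|i-j|>1$, the lattice conjunction gives $\mu_{\vec{x}}(\beta_{ij})=1-\min\{x_i,x_j\}$, equal to $1$ exactly when $\min\{x_i,x_j\}=0$, i.e.\ $x_i=0$ or $x_j=0$.

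**Reformulating the intersection.** So the $1$-set of $\Ax$ is the set of $\vec{x}\in[0,1]^n$ satisfying simultaneously: (1) $\sum_i x_i\geq 1$; (2) $x_i+x_{i+1}\leq 1$ for every consecutive pair; and (3) for every non-consecutive pair $i,j$ (with $|i-j|>1$), at least one of $x_i,x_j$ is zero. I would first use condition (3) to argue that the support $\{i : x_i>0\}$ can contain no two indices that are $\geq 2$ apart; hence the support is a set of mutually-consecutive indices, which forces the support to be either a single index or a pair $\{i,i+1\}$ of consecutive indices. I then combine this with conditions (1) and (2) to pin down the coordinates: if the support is a single index $i$, then $\sum_k x_k = x_i\geq 1$ forces $x_i=1$, so $\vec{x}=e_i$; if the support is a consecutive pair $\{i,i+1\}$, then (1) gives $x_i+x_{i+1}\geq 1$ while (2) gives $x_i+x_{i+1}\leq 1$, hence $x_i+x_{i+1}=1$, so $\vec{x}\in\conv\{e_i,e_{i+1}\}$. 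This establishes that the $1$-set is contained in the Hamiltonian path.

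**The reverse inclusion.** For the opposite containment I take an arbitrary point $\vec{x}\in\conv\{e_i,e_{i+1}\}$ for some $i$, so $\vec{x}=\lambda e_i+(1-\lambda)e_{i+1}$ with $\lambda\in[0,1]$, and verify it satisfies (1), (2), (3) directly: the coordinate sum is $1$; the only possibly-nonzero consecutive pair is $(x_i,x_{i+1})$ with sum $1\leq 1$, while every other consecutive pair has at least one zero coordinate; and any non-consecutive pair has at least one zero coordinate since at most the two adjacent indices $i,i+1$ are nonzero. Each check is immediate from the explicit form of $\vec{x}$.

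**Main obstacle.** None of the individual semantic translations is difficult, so the only genuinely delicate point is the combinatorial argument that conditions (1)--(3) force the support to be a block of at most two consecutive indices. The subtlety is that condition (3) only constrains \emph{pairs} at distance $>1$, so I must be careful to argue that a support containing three or more indices, or two indices at distance exactly $2$, is impossible: any two support indices must be at distance $1$ by (3), and a set in which every two distinct elements are at mutual distance $1$ has at most two elements (three distinct integers cannot be pairwise consecutive). I expect to state this last elementary fact explicitly, as it is the real hinge of the containment $\subseteq$; the rest reduces to bookkeeping with the arithmetic constraints.
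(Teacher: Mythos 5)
Your proposal is correct and follows essentially the same route as the paper: translate $\rho$, $\alpha_{ij}$, $\beta_{ij}$ via the standard semantics into the constraints $\sum_i x_i \geq 1$, $x_i + x_{i+1}\leq 1$, and $\min\{x_i,x_j\}=0$ for $|i-j|>1$, and then identify the intersection with $\bigcup_{i=1}^{n-1}\conv\{e_i,e_{i+1}\}$ (the paper does this by layering the three families into nested sets $I_1\supseteq I_2\supseteq I_3$, whereas you intersect all at once via the support analysis). If anything, you make explicit the combinatorial hinge --- that pairwise-consecutive support indices form a block of size at most two --- which the paper's step $I_1=\bigcup F_i$ uses tacitly.
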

\begin{proof}
For $n=3$ the proof is provided by Figures \ref{fig:gr1}, \ref{fig:gr2} and \ref{fig:gr3}.

\bigskip

\begin{figure}[ht!]
 \centering
 \subfigure[$1$-set of (\ref{form:3}).]
   {\label{fig:gr1}\includegraphics[scale=0.3]{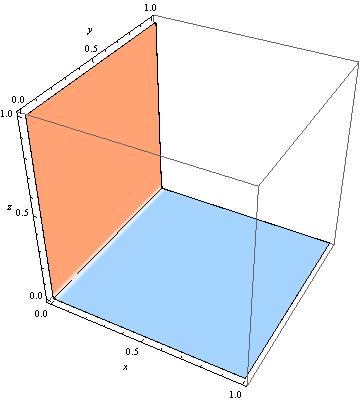}} \hspace{0.14cm}
 \subfigure[$1$-set of (\ref{form:2}--\ref{form:3}).]
   {\label{fig:gr2}\includegraphics[scale=0.3]{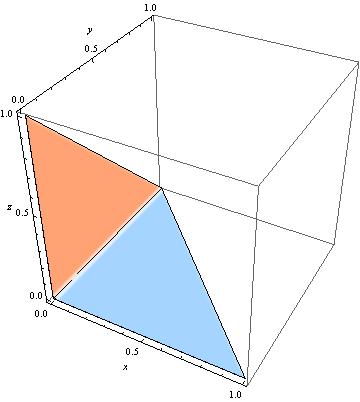}} \hspace{0.14cm}
 \subfigure[$1$-set of (\ref{form:1}--\ref{form:3}).]
   {\label{fig:gr3}\includegraphics[scale=0.3]{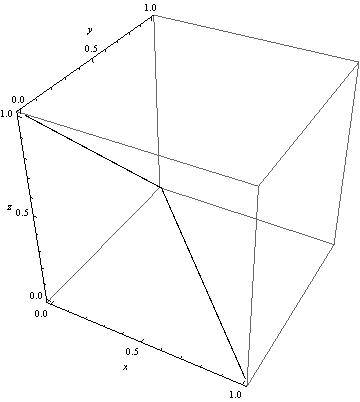}}
 \caption{The case $n=3$ of the proof of Lemma \ref{lem:1-set}.}
 \end{figure}

\bigskip In general, let $I_1$ be the $1$-set of (\ref{form:3}). Then $(x_1,\dots,x_n) \in I_1$ if, and only if, for all $i,j \in \{1,\dots,n\}$ such that $|i-j|>1$,  $1-\min\{x_i,x_j\}=1$, that is, if, and only if, one between $x_i$ and $x_j$ equals $0$. Thus, if $F_{i}$ is the $2$-dimensional face of $[0,1]^n$ containing both $e_i$ and $e_{i+1}$, we have $I_1=\bigcup^{i=1}_{n-1} F_i$.

\medskip
Let now $I_2$ be the $1$-set of (\ref{form:2}--\ref{form:3}). Then, $(x_1,\dots,x_n) \in I_2$ if, and only if, $I_2\subseteq I_1$, and, for all $i \in \{1,\dots,n-1\}$,  $1-\max\{0,x_i+x_{i+1}-1\}=1$, that is
$x_i+x_{i+1}\leq 1$. Thus, $I_2=\bigcup^{n-1}_{i=1} \conv{\{0,e_i,e_{i+1}\}}$.

\medskip
Finally, let $I_3$ be the $1$-set of (\ref{form:1}--\ref{form:3}), \ie\ of $\Ax$. Then, $(x_1,\dots,x_n) \in I_3$ if, and only if, $I_3\subseteq I_2$, and $\min\{1,x_1+\cdots+x_{n}\}=1$. Thus, $I_3=\bigcup^{n-1}_{i=1} \conv{\{e_i,e_{i+1}\}}$.
\end{proof}
\begin{thmII}
 The following are equivalent.
\begin{enumerate}
\item[i\textup{)}] $P$ is a pseudo-triangular basis of fuzzy sets.
\item[ii\textup{)}] $P$ is separating, and $\Theta_{P}=\Ax^{\vdash}$.
\end{enumerate}
\end{thmII}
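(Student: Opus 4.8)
\emph{The plan.} I would first recast both halves of (ii) geometrically inside $[0,1]^n$, by means of the map $T_{P}$ and the $1$-set operator, throughout reading the separating condition on $P$ as the statement that the family $P$ separates the points of $[0,1]$, i.e.\ that $T_{P}$ is injective. Write $V_\varphi\subseteq[0,1]^n$ for the $1$-set of a formula $\varphi\in\Form_n$. Since the assignments realised by $P$ are exactly the points $T_{P}(x)=(f_1(x),\ldots,f_n(x))$, $x\in[0,1]$, Definition \ref{d:thetap} says that $\varphi\in\Theta_P$ if and only if $\range T_{P}\subseteq V_\varphi$. On the other hand an evaluation sends every formula of $\Ax$ to $1$ precisely when the associated point of $[0,1]^n$ lies in the $1$-set of $\Ax$, which by Lemma \ref{lem:1-set} is the Hamiltonian path $H=\bigcup_{i=1}^{n-1}\conv\{e_i,e_{i+1}\}$; hence $\varphi\in\Ax^{\vDash}$ if and only if $H\subseteq V_\varphi$. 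The whole argument then reduces to comparing the two compact sets $\range T_{P}$ and $H$ and feeding the outcome into Theorem I.

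\emph{From \textup{(i)} to \textup{(ii)}.} Assume $P$ is a pseudo-triangular basis. By Theorem I, \textup{(i)}$\Rightarrow$\textup{(iii)}, the map $T_{P}$ is injective --- so $P$ is separating --- and $\range T_{P}=H$. Substituting $\range T_{P}=H$ into the criterion above gives $\Theta_P=\Ax^{\vDash}$ at once. To replace $\Ax^{\vDash}$ by $\Ax^{\vdash}$ I would invoke completeness: although $\mathscr{L}_n$ is not strongly complete, the set $\Ax$ is \emph{finite}, and for a finite premise set the deduction theorem of $\mathscr{L}$ reduces $\Ax\vDash\varphi$ to $\vDash\psi\to\varphi$ for a suitable strong-conjunction power $\psi$ of the members of $\Ax$; the completeness theorem $\vdash\,=\,\vDash$ then yields $\Ax\vdash\varphi$. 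Thus $\Ax^{\vDash}=\Ax^{\vdash}$, whence $\Theta_P=\Ax^{\vdash}$ and (ii) holds.

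\emph{From \textup{(ii)} to \textup{(i)}.} Conversely, assume $P$ separating and $\Theta_P=\Ax^{\vdash}$. Since $\Ax\subseteq\Ax^{\vdash}=\Theta_P$, every member of $\Ax$ evaluates to $1$ at each $T_{P}(x)$, and intersecting the corresponding $1$-sets gives $\range T_{P}\subseteq H$. For the opposite inclusion I would use a separation fact: every compact $K\subseteq[0,1]^n$ is the intersection of the family $\{V_\varphi\mid K\subseteq V_\varphi\}$. This rests on McNaughton's theorem, by which the $1$-sets of formulae of $\Form_n$ are precisely the rational polyhedra contained in $[0,1]^n$: given $p\notin K$, a rational open box $U$ with $p\in U$ and $U\cap K=\varnothing$ (which exists because $K$ is closed and $p\notin K$) yields the rational polyhedron $[0,1]^n\setminus U$, which contains $K$ but not $p$. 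Applying this with $K=\range T_{P}$: if some $p\in H\setminus\range T_{P}$ existed, we would obtain $\varphi$ with $\range T_{P}\subseteq V_\varphi$, that is $\varphi\in\Theta_P=\Ax^{\vDash}$, so $H\subseteq V_\varphi$ and in particular $p\in V_\varphi$; but $p\notin V_\varphi$ by construction --- a contradiction. Hence $\range T_{P}=H$. As $P$ is separating, $T_{P}$ is injective, so Theorem I, \textup{(iii)}$\Rightarrow$\textup{(i)}, shows $P$ is a pseudo-triangular basis.

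\emph{The main obstacle.} The dictionary between theories and $1$-sets is routine; the two substantive ingredients are the identity $\Ax^{\vDash}=\Ax^{\vdash}$ and the separation fact. The first is delicate precisely because $\mathscr{L}_n$ is \emph{not} strongly complete, so one must lean on the finiteness of $\Ax$ through the deduction theorem --- this is the step I expect to require the most care. The second is where McNaughton's theorem, and with it piecewise-linear geometry, is indispensable. Conceptually, the separating hypothesis is exactly what the theory alone cannot encode: $\Theta_P$ sees only the set $\range T_{P}$, not the manner in which $T_{P}$ sweeps it out, so $\Theta_P=\Ax^{\vdash}$ forces no more than $\range T_{P}=H$; injectivity of $T_{P}$ is the missing datum that restores clause \textit{d}) of Definition \ref{def:triangularbasis} via Theorem I.
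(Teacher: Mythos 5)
Your proposal is correct, and its skeleton coincides with the paper's: both directions run through the dictionary between theories and $1$-sets, Lemma \ref{lem:1-set}, Theorem I, and the identity $\Ax^{\vDash}=\Ax^{\vdash}$ for the finite set $\Ax$. The genuine divergence is in the key step of \textit{ii}) $\Rightarrow$ \textit{i}), the inclusion $\bigcup_{i=1}^{n-1}\conv{\{e_i,e_{i+1}\}}\subseteq\range T_{P}$. You separate an arbitrary missing point $p$ from the compact set $\range T_{P}$ by a rational box, and then invoke the (nontrivial) converse direction of McNaughton's theorem --- that \emph{every} rational polyhedron in $[0,1]^n$ is the $1$-set of some formula --- to realise the complement of the box as a $1$-set. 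The paper deliberately avoids this machinery: it observes that $R=\range T_{P}$ is closed (closed map lemma) and \emph{connected}, so that if $R$ omits any point of the Hamiltonian path it must omit one of the endpoints $e_1$ or $e_n$; it then separates $R$ from, say, $e_n$ by the explicitly constructed formula $\varphi_k=\neg X_n\oplus\cdots\oplus\neg X_n$ ($k$ times), whose $1$-set is $\{x_n\leq\frac{k-1}{k}\}$, for $k$ large enough. Your route buys generality and shorter topology (no connectedness analysis, works for any closed subset), at the cost of importing a substantial result on definable sets; the paper's route is elementary and self-contained, at the cost of the connectedness detour. Incidentally, your treatment of the easy inclusion $\range T_{P}\subseteq I_{\Ax}$ via $\Ax\subseteq\Ax^{\vdash}=\Theta_P$ is slightly more direct than the paper's, which routes even this step through finite strong completeness.

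One justification is stated too casually and deserves repair. You derive $\Ax^{\vDash}=\Ax^{\vdash}$ from ``the deduction theorem plus completeness,'' but the deduction theorem of $\mathscr{L}$ concerns $\vdash$ only: it gives $\Ax\vdash\varphi$ iff $\vdash\psi^k\to\varphi$ for some $k$, where $\psi^k$ is a $k$-fold strong conjunction. The missing link is the implication from $\Ax\vDash\varphi$ to $\vDash\psi^k\to\varphi$ \emph{for some single $k$ uniformly over all evaluations}; this uniformity is not a formal consequence of weak completeness and is precisely the content of the Hay--W\'ojcicki theorem on finite strong standard completeness, which the paper cites as \cite[4.6.7]{cdm}. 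Since that theorem is standard, your argument stands once you cite it in place of the sketched reduction, but as written this step --- which you yourself flag as the delicate one --- is not yet proved.
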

\begin{proof}
{\it i}) $\Rightarrow$ {\it ii}). It is immediate to check that a pseudo-triangular
basis of fuzzy sets is separating.

\smallskip We next show that $\Ax^{\vdash}\subseteq \Theta_P$.
To this aim,
let $\mu\colon\Form_n\to [0,1]$ be an assignment realised by $P$ at
$x$. By Definition \ref{d:thetap}:
\begin{align*}
   \mu(\rho) &\,=\, \min\{1, \mu(X_1)+\cdots+\mu(X_n)\}=
   \min\{1, f_1(x)+\cdots+f_n(x)\}\,; &  \\
   \mu(\alpha_{ij}) &\,=\, 1-\max\{0,\mu(X_i)+\mu(X_j)-1\}=
     1-\max\{0,f_i(x)+f_j(x)-1\}\,, \\
     & \text{ for $i,j=1,\ldots,n$, and $|i-j|=1$;}\\
   \mu(\beta_{ij}) &\,=\, 1- \min\{\mu(X_i),\mu(X_j)\}=
    1- \min\{f_i(x),f_j(x)\}\,, \\
    & \text{ for $i,j=1,\ldots,n$, and $|i-j|>1$.}
\end{align*}
By \textit{c}) in Definition \ref{def:triangularbasis},
$\mu(\rho)=1$, and $\mu(\alpha_{ij})=1$ for all $i,j=1,\ldots,n$ such that $|i-j|=1$. By \textit{b}) in Definition \ref{def:triangularbasis},
for $|i-j|>1$, at least one between $f_i(x)$ and $f_j(x)$ equals $0$.
Thus, $\mu(\beta_{ij})=1$, for all $i,j=1,\ldots,n$, and $|i-j|>1$. Hence $\Ax\subseteq\Theta_P$, indeed.
By Lemma \ref{l:lemmino}  we therefore have $\Ax^{\vdash}\subseteq  \Theta_P$.

\smallskip It remains to prove that  $\Ax^{\vdash}\supseteq  \Theta_P$. Let $I_{\Ax}$ be the $1$-set of $\Ax$.
By Lemma \ref{lem:1-set} we have
\begin{equation}
\label{eq:step1}  {I_{\Ax} = \bigcup^{n-1}_{i=1} \conv{\{e_{i},e_{i+1}\}}\, .}
\end{equation}
On the other hand, by Theorem \ref{th:1} we have
\begin{align}
\label{eq:step2} \range{T_{P}} = \bigcup^{n-1}_{i=1} \conv{\{e_{i},e_{i+1}\}}\, .
\end{align}
Hence $I_{\Ax}=\range{T_{P}}$ by (\ref{eq:step1}--\ref{eq:step2}). If now $\varphi\in\Theta_P$, and $I_{\varphi}$ is its $1$-set,
then each assignment realised by $P$ at some point of $[0,1]$ satisfies $\varphi$ by the definition of $\Theta_{P}$, and therefore
we have
$I_{\varphi}\supseteq I_{\Ax}$. By the definition of semantic consequence we may rewrite the latter inclusion as
$\varphi \in {\Ax}^{\vDash}$. Since $\Ax$ is a finite set, by  the Hay-W\'ojcicki's Theorem \cite[4.6.7]{cdm}  we conclude $\Ax^{\vDash}=\Ax^{\vdash}$, as was to be shown.

\smallskip {\it ii}) $\Rightarrow$ {\it i}) That $P$ is separating is evidently equivalent to the fact that the map $T_{P}\colon [0,1]$ $\to [0,1]^{n}$ is injective, so let us assume the latter for the rest of this proof. Writing again $I_{\Ax}$ for the $1$-set of $\Ax$,  by Lemma \ref{lem:1-set} we have (\ref{eq:step1}). Hence it  suffices to show
\begin{equation}\label{eq:suffices}
\range{T_{P}}=I_{\Ax}\,,
\end{equation}
for then Theorem I implies that $P$ is a pseudo-triangular basis of fuzzy sets.

To prove the inclusion $\range{T_{P}}\subseteq I_{\Ax}$, let
$x=(x_{1},\ldots, x_{n})\in \range{T_{P}}$. Then the assignment $\mu(X_{i})=x_{i}$ is realised by $P$ at $x$, and thus $\mu\vDash \Theta_{P}$ by the definition of $\Theta_{P}$.  Since  $\Theta_{P}=\Ax^{\vdash}$ by assumption, and since $\Ax$ is finite, by  the Hay-W\'ojcicki's Theorem \cite[4.6.7]{cdm} we have
$\Theta_{P}= \Ax^{\vDash}$, and therefore in particular $\mu\vDash \Ax$.

To prove the converse, let us set $R=\range{T_{P}}\subseteq I_{\Ax}$. Assume by way of contradiction that $R\subset I_{\Ax}$, \ie\ there is $x \in I_{\Ax}$ such that $x \not \in R$. Therefore, if we set $D= I_{\Ax}\setminus \{x\}$, we have $R \subseteq D$. But since $R$ is the continuous image of a connected set,
namely $[0,1]$, it is itself connected, whereas  by   (\ref{eq:step1}) we see that $D$ has two connected components $D_{1}$ and $D_{n}$ containing $e_{1}$ and $e_{n}$, respectively. Hence either $R
\subseteq D_{1}$, or $R
\subseteq D_{n}$. Say the former holds, without loss of generality, so that $e_{n}
 \not \in R$. Next observe that $R$ must be a closed  set in the metric space  $I_{\Ax}$, the latter endowed with the metric $d(\cdot,\cdot)$ induced by the Euclidean distance of $[0,1]^{n}$: indeed, this is a special
 case of the well-known closed map lemma, stating that a continuous map from a compact space to a Hausdorff (in particular, metric) space must be closed (=must send closed sets to closed sets). Hence $e_{n}$ is an interior point of $I_{\Ax}\setminus R$, and thus  there is an open set $U\equiv U(e_{n},\epsilon)=\{ x \in I_{\Ax} \mid d(e_{n},x)<\epsilon\}$, for some real number $\epsilon > 0$, such that  $U\cap R=\emptyset$.
For an integer $k\geq 1$,  let us consider the formula in $\Form_{n}$
\[
\varphi_k \, =\, \underbrace{\neg X_n\oplus\cdots\oplus\neg X_{n}}_{k \text{ times}}\,.
\]
Further, let $I_{\varphi_{k}}$ be the $1$-set of $\varphi_{k}$. Direct inspection shows that $I_{\varphi_{k}}=\{(x_1,\ldots,x_{n})\in [0,1]^{n} \mid x_{n} \leq \frac{k-1}{k}\}$.  Let $k_{0}\geq 1$ be the least integer that satisfies
\[
k_{0}\geq \frac{\sqrt{2}}{\epsilon}\,.
\]
Then a simple computation shows
\begin{equation}\label{eq:terzultima}
R \subseteq I_{\varphi_{k_{0}}}\,.
\end{equation}
By (\ref{eq:terzultima}) we infer at once
\begin{equation}\label{eq:penultima}
\varphi_{k_{0}}\in \Theta_{P}\,.
\end{equation}
On the other hand, the assignment $\mu\colon \Form_{n}\to [0,1]$ such that $\mu(X_{n})=1$ and $\mu(X_{i})=0$, for $i=1,\ldots,n-1$, satisfies $\mu(\varphi_{k_{0}})=0$ and  evaluates each formula in $\Ax$ to $1$, because $\{e_{n}\}\in I_{\Ax}$. Hence
$\varphi_{k_{0}}\not \in \Ax^{\vDash}$, and therefore
\begin{equation}\label{eq:ultima}
\varphi_{k_{0}}\not \in \Ax^{\vdash}
\end{equation}
by soundness \cite[4.5.1]{cdm}.  Now (\ref{eq:penultima}--\ref{eq:ultima}) yield the desired contradiction $\Theta_{P}\not = \Ax^{\vdash}$.
\end{proof}

\section{Epilogue.}\label{s:further}
 How can we generalise the results above to situations in which
we are concerned with several physical observables? Here, we are to deal with fuzzy sets $f_i\colon [0,1]^m\to [0,1]$, $i=1,\ldots, n$,
 the integer $m\geq 1$ being the number of obser\-vables. The generalisation of triangular bases  to this setting requires
 elements of piecewise linear topology \cite{rourke}, which we assume in the following discussion. Consider a triangulation $\Sigma$ of
 $[0,1]^m$, and let $v_1,\ldots,v_l$ be the (finite) list of vertices of $\Sigma$. For each $v_i$, let $h_i\colon [0,1]^m\to [0,1]$ be the function
 such that $h_i(v_i)=1$, $h_i(v_j)=0$ if $j\neq i$, and $h_i$ agrees with an affine linear map $\R^m\to\R$ on each simplex of $\Sigma$. Then $h_i$ is automatically continuous and piecewise-linear,
 and is called the \emph{Schauder hat} of $\Sigma$ at $v_i$. The collection $H_\Sigma=\{h_i \mid i=1,\ldots, n\}$ is the \emph{Schauder basis of $\Sigma$}.
 We then define the family $P$ of fuzzy sets to be a \emph{triangular basis} if is satisfies $P=H_\Sigma$ for some triangulation $\Sigma$ of
 $[0,1]^n$. It is an exercise to check that this definition agrees with Definition \ref{def:triangularbasis} in case $n=1$. It is also easy to see
  that Schauder bases are Ruspini partitions. Unfortunately, however, no elementary characterisation of Schauder bases analogous to our Theorem
  I is known. Nonetheless, abstract characterisations of Schauder bases have been obtained using homology and other mathematical tools \cite{tams}.
  This leads to the notion of
  \emph{abstract Schauder bases}, the higher-dimensional analogue of pseudo-triangular bases of fuzzy sets, originally introduced in the last-named author's Ph.D.\ thesis. Remarkably, {\L}ukasiewicz logic
  does express the notion of abstract Schauder basis, so that it is possible to formulate a higher-dimensional analogue of our
  Theorem II. For the algebraic treatment of abstract Schauder bases in the language of lattice-groups---structures closely related to MV-algebras, the algebraic
  semantics of {\L}ukasiewicz logic---the interested reader is referred to \cite{tams, MaMu, tams2}, and to the references therein. For an account of bases
  in the context of MV-algebras and \luk{} logic themselves, see \cite{mundicibis}.

\smallskip  \luk{}  and G\"odel-Dummett logics are part of a  hierarchy of systems based on  \emph{triangular norms}; see \cite{hajek}.
It has been argued that the hierarchy, together with its generalisations, provides a framework that makes precise the notion of \emph{mathematical fuzzy logic} \cite{hb1, hb2}. The proof of Lemma \ref{l:lemmino} above, though easy, does say that the programme of axiomatising properties of fuzzy sets by means of a $[0,1]$-valued logic makes sense at a very general level.  It is important to stress that, to carry this programme out,
one needs a reasonably complete set of analogues of standard notions in mathematical logic\footnote{For the r\^{o}le that  such  notions may play even in developing a specific Mamdani-type fuzzy control system, see \cite{fn1, fn2}.}---\eg\ deductively closed theories and axiomatisations. Mathematical fuzzy logic, in the sense above, does provide such analogues. It is therefore possible, at least in principle, to develop this line of research extensively.\footnote{We already mentioned our previous contribution \cite{ijar} in this direction. Here we add in passing that it would  be important to further investigate  systems that are not based on the three standard triangular norms (\luk{}, G\"odel, and Product). For an instance of how the lack of continuity in a  triangular norm affects the formal semantics, see \cite{abm}. And for an example of how it may be more appropriate to use semantic not exclusively based on the notion of degree of truth, see \cite{abim}.}
 The benefits would surely be
equally distributed between the theoretical and the application-oriented parties. One knows more about, say, G\"odel-Dummett logic as a theoretical many-valued system, if one knows exactly to what extent the logic is capable of expressing the semantical notion of Ruspini partition. And one can make more conscious design choices in facing, say, the problem of developing a specific fuzzy-based control system, if one has that very same information about G\"odel-Dummett logic available.

\bigskip
\bigskip
\paragraph{\bf Acknowledgements} The present paper is a much-expanded follow up to the conference paper \cite{fuzzieee}.


\end{document}